\newtheorem{thm}{Theorem}
\newtheorem{lem}{Lemma}
\newtheorem{cor}{Corollary}
\theoremstyle{definition}
\newtheorem{definition}{Definition}
\title{Information Loss in Static Nonlinearities}
\author{\IEEEauthorblockN{Bernhard C. Geiger\IEEEauthorrefmark{1}, Christian Feldbauer\IEEEauthorrefmark{1}, Gernot Kubin\IEEEauthorrefmark{1}
\IEEEauthorblockA{\IEEEauthorrefmark{1}Signal Processing and Speech Communication Laboratory, Graz University of Technology, Austria}
\{geiger,feldbauer,gernot.kubin\}@tugraz.at}}
\begin{document}
\newcounter{myTempCnt}
\newcommand{\x}[1]{x[#1]}
\newcommand{\y}[1]{y[#1]}

\newcommand{\pdfy}{f_Y(y)}

\newcommand{\ent}[1]{H(#1)}
\newcommand{\diffent}[1]{h(#1)}
\newcommand{\derate}[1]{\bar{h}\left(#1\right)}
\newcommand{\mutinf}[1]{I(#1)}
\newcommand{\ginf}[1]{I_G(#1)}
\newcommand{\kld}[2]{D(#1||#2)}
\newcommand{\binent}[1]{H_2(#1)}
\newcommand{\binentneg}[1]{H_2^{-1}\left(#1\right)}
\newcommand{\normmi}[1]{\ent{#1}}

\newcommand{\dom}[1]{\mathcal{#1}}
\newcommand{\indset}[1]{\mathbb{I}\left({#1}\right)}

\newcommand{\unif}[2]{\mathcal{U}\left(#1,#2\right)}
\newcommand{\chis}[1]{\chi^2\left(#1\right)}
\newcommand{\chir}[1]{\chi\left(#1\right)}
\newcommand{\normdist}[2]{\mathcal{N}\left(#1,#2\right)}
\newcommand{\Prob}[1]{\mathrm{Pr}(#1)}

\newcommand{\expec}[1]{\mathrm{E}\left\{#1\right\}}
\newcommand{\expecwrt}[2]{\mathrm{E}_{#1}\left\{#2\right\}}
\newcommand{\var}[1]{\mathrm{Var}\left\{#1\right\}}
\renewcommand{\det}[1]{\mathrm{det}\left\{#1\right\}}
\newcommand{\cov}[1]{\mathrm{Cov}\left\{#1\right\}}
\newcommand{\sgn}[1]{\mathrm{sgn}\left(#1\right)}
\newcommand{\sinc}[1]{\mathrm{sinc}\left(#1\right)}
\newcommand{\e}[1]{\mathrm{e}^{#1}}
\newcommand{\multint}{\iint{\cdots}\int}

\newcommand{\hvec}{\mathbf{h}}
\newcommand{\avec}{\mathbf{a}}
\newcommand{\fvec}{\mathbf{f}}
\newcommand{\vvec}{\mathbf{v}}
\newcommand{\xvec}{\mathbf{x}}
\newcommand{\Xvec}{\mathbf{X}}
\newcommand{\Xhvec}{\hat{\mathbf{X}}}
\newcommand{\xhvec}{\hat{\mathbf{x}}}
\newcommand{\xtvec}{\tilde{\mathbf{x}}}
\newcommand{\Yvec}{\mathbf{Y}}
\newcommand{\yvec}{\mathbf{y}}
\newcommand{\Hmat}{\mathbf{H}}
\newcommand{\Amat}{\mathbf{A}}
\newcommand{\Fmat}{\mathbf{F}}

\newcommand{\zerovec}{\mathbf{0}}
\newcommand{\eye}{\mathbf{I}}
\newcommand{\evec}{\mathbf{i}}

\newcommand{\zeroone}{\left[\begin{array}{c}\zerovec^T\\ \eye\end{array} \right]}
\newcommand{\zerooneT}{\left[\begin{array}{cc}\zerovec & \eye\end{array} \right]}
\newcommand{\zerooneM}{\left[\begin{array}{cc}\zerovec &\zerovec^T\\\zerovec& \eye\end{array} \right]}

\newcommand{\Cxx}{\mathbf{C}_{XX}}
\newcommand{\Cxh}{\mathbf{C}_{\hat{X}\hat{X}}}
\newcommand{\rxx}{\mathbf{r}_{XX}}
\newcommand{\Cxy}{\mathbf{C}_{XY}}
\newcommand{\Cyy}{\mathbf{C}_{YY}}
\newcommand{\Cnn}{\mathbf{C}_{NN}}
\newcommand{\Cyx}{\mathbf{C}_{YX}}
\newcommand{\Cygx}{\mathbf{C}_{Y|X}}

\newcommand{\NN}{{N{\times}N}}
\newcommand{\perr}{P_e}
\newcommand{\perh}{\hat{\perr}}
\newcommand{\pert}{\tilde{\perr}}

\maketitle

\begin{abstract}
In this work, conditional entropy is used to quantify the information loss induced by passing a continuous random variable through a memoryless nonlinear input-output system. We derive an expression for the information loss depending on the input density and the nonlinearity and show that the result is strongly related to the non-injectivity of the considered system. Tight upper bounds are presented, which can be evaluated with less difficulty than a direct evaluation of the information loss, which involves the logarithm of a sum. Application of our results is illustrated on a set of examples.
\end{abstract}

\section{Introduction}\label{sec:intro}
Information processing, in the sense of changing the information of or retrieving information from a signal, can only be accomplished by nonlinear systems, while causal, stable, linear systems do not affect a signal's entropy rate~\cite{Shannon_TheoryOfComm},~\cite[pp.~663]{Papoulis_Probability}. As a consequence, in the past information-theoretic measures in system analysis were almost exclusively used for highly nonlinear, chaotic systems, mainly motivated by the works of Kolmogorov~\cite{Kolmogorov_Entropy1,Kolmogorov_Entropy2} and Sinai~\cite{Sinai_Entropy}. On the contrary, linear systems and relatively simple nonlinear systems (e.g., containing static nonlinearities) usually lack information-theoretic descriptions and are often characterized by second-order statistics or energetic measures (e.g., transfer function, power spectrum, signal-to-distortion ratio, mean square error between input and output, correlation functions, etc.).

In this work, we characterize the amount of information lost by passing a signal through a static nonlinearity. These systems, although simple, are by no means irrelevant in technical applications: One of the major components of the energy detector, a low-complexity receiver architecture for wireless communications, is a square-law device. Rectifiers, omnipresent in electronic systems are another example for static nonlinearities, which further constitute the nonlinear components in Wiener and Hammerstein systems. This work thus acts as a first step towards the goal of a comprehensive information-theoretic framework for more general nonlinear systems, providing an alternative to the prevailing energetic descriptions. While an analysis of information rates will be left for future work, this paper is concerned with zeroth-order entropies only.

Information loss can most generally be expressed as the difference of mutual informations,
\begin{equation}
 \mutinf{\hat{X};X}-\mutinf{\hat{X};Y}\label{eq:genloss}
\end{equation}
where the random variables (RV) $X$ and $Y$ are two descriptions for another RV $\hat{X}$.
In words, the difference in~\eqref{eq:genloss} is the information lost by changing the description from $X$ to $Y$ (cf. Fig.~\ref{fig:sysmod}). This kind of information loss is of particular interest for learning/coding/clustering (e.g., word clustering~\cite{Dhillon_LossLearning}) and triggered the development of optimal representation techniques~\cite{Tishby_InformationBottleneck}. Generally, changing the description from $X$ to $Y$ does not necessarily imply that the information loss is non-negative. In the special case that $Y$ is a function of $X$ -- the case we are concerned with -- the data processing inequality states that information can only be lost~\cite[pp.~35]{Cover_Information2}. In other words, the difference in~\eqref{eq:genloss} is non-negative.

In case $\hat{X}$ is identical to the RV $X$ itself, the information loss simplifies to (cf. proof of Theorem~\ref{thm:InfoLoss})
\begin{equation}
 \ent{X|Y}
\end{equation}
i.e., to the conditional entropy of $X$ given the description $Y$. This equivocation, as Shannon termed it in his seminal paper~\cite{Shannon_TheoryOfComm}, was originally used to describe the information loss for stochastic relations between the RVs $X$ and $Y$. In contrary to that, we are concerned with deterministic functions $Y=g(X)$. 

To our knowledge, little work has been done in this regard. Some results are available for the capacity of nonlinear channels~\cite{Zillmann_NonlinearChannel,Abou-Faycal_CapacityNLChannels}, and recently the capacity of a noisy (possibly nonlinear and non-injective) function was analyzed~\cite{Simon_NoisyFunct,Nazer_ComputationMAC}. Considering deterministic systems, we found that Pippenger used equivocation to characterize the information loss induced by multiplying two integer numbers~\cite{Pippenger_MultLoss}, while the coarse observation of discrete stochastic processes is analyzed in~\cite{Watanabe_InfoLoss}. An analysis of how much information is lost by passing a continuous RV through a static nonlinearity cannot be found in the literature.

Aside from providing information-theoretic descriptions for the nonlinear systems mentioned above, our results also apply to different fields of signal processing and communication theory. To be specific, the information loss may prove useful to compute error bounds for the reconstruction of nonlinearly distorted signals~\cite[pp.~38]{Cover_Information2} and in capacity considerations for nonlinear channels. To give another example, according to~\cite{Simon_NoisyFunct} the capacity of a noisy function $G(\cdot)$ (a noisy implementation of the determinisitic function $g(\cdot)$), is given as the maximum of
\begin{equation}
 \ent{X|Y} + \mutinf{G(X);Y}
\end{equation}
over all (discrete) distributions of $X$. In this work, we give an expression for the first part of this equation, assuming that $X$ is a continuous RV.

After introducing the problem statement in Section~\ref{sec:problem}, an expression for the information loss is derived and related to the non-injectivity of the system in Section~\ref{sec:derivation}, while bounds on the information loss are presented in Section~\ref{sec:bounds}. Section~\ref{sec:examples} illustrates the theoretical results with the help of examples.

This is an extended version of a paper submitted to an IEEE conference~\cite{Geiger_Conf2011}.

\section{Problem Statement}
\label{sec:problem}
We focus our attention on a class of systems whose input-output behavior can be described by a piecewise strictly monotone function. While this excludes functions which are constant on some proper interval (e.g., limiters or quantizers, for which it can be shown that the information loss becomes infinite), many well-behaved functions can be interpreted in the light of the forthcoming Definition. Take, e.g., the function $g(x)=\cos(x)$ for some $\dom{X}=[0,L\pi)$. While the function is clearly not monotone on $\dom{X}$, it is strictly monotone for all $\dom{X}_i=[(i-1)\pi,i\pi)$, $i=1,\dots,L$. As it turns out, piecewise strict monotonicity does not rule out functions whose derivative is zero on a finite set. In addition to that, neither continuity nor differentiability are requirements imposed by Definition~\ref{def:function}, but only piecewise continuity and piecewise differentiability.
\begin{definition}\label{def:function}
Let $g{:}\ \dom{X}\to \dom{Y}$, $\dom{X},\dom{Y}\subseteq \mathbb{R}$, be a bounded, surjective, Borel measurable function which is piecewise strictly monotone on $L$ subdomains $\dom{X}_l$
\begin{eqnarray}
 g(x) = \begin{cases}
         g_1(x), & \text{if }  x\in\dom{X}_1\\
g_2(x), &\text{if }  x\in\dom{X}_2\\
\vdots\\
g_L(x), & \text{if } x\in\dom{X}_L
        \end{cases}
\end{eqnarray}
where $g_l{:}\ \dom{X}_l\to\dom{Y}_l$ are bijective. We assume that the subdomains are an ordered set of disjoint, proper intervals  with $\bigcup_{l=1}^L \dom{X}_l =\dom{X}$ and $x_i<x_j$ for all $x_i\in\dom{X}_i$, $x_j\in\dom{X}_j$ whenever $i<j$. We further require all $g_l(\cdot)$ to be differentiable on the interval enclosure of $\dom{X}_l$.
\end{definition}
Note that $\dom{X}$ does not need to be an interval itself. Strict monotonicity implies that the function is invertible on each interval $\dom{X}_l$, i.e., there exists an inverse function $g_l^{-1}{:}\ \dom{Y}_l \to\dom{X}_l$, where $\dom{Y}_l$ is the image of $\dom{X}_l$. However, the function $g(\cdot)$ needs not be invertible on $\dom{X}$, i.e., it can be non-injective.
Equivalently, the images of the intervals, $\dom{Y}_l$, unite to $\dom{Y}$, but need not be disjoint. Let $g(\cdot)$ describe the input-output behavior of the system under consideration (see Fig.~\ref{fig:sysmod}).

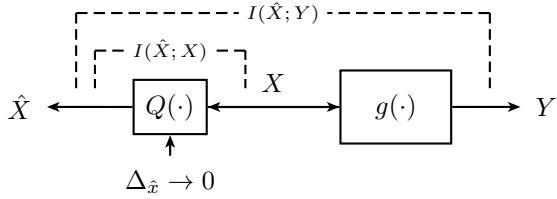
\begin{figure}[t]
\centering
\begin{pspicture}[showgrid=false](1,1)(8,3.5)
 	\pssignal(1,2){x}{$\hat{X}$}
 	\pssignal(3,1){n}{$\Delta_{\hat{x}} \to 0$}
	\psfblock[framesize=1 0.75](3,2){oplus}{$Q(\cdot)$}
	\psfblock[framesize=1.5 1](6,2){c}{$g(\cdot)$}
	\pssignal(8,2){y}{$Y$}
  \ncline[style=Arrow]{n}{oplus}
	\ncline[style=Arrow]{oplus}{x}
 \nclist[style=Arrow]{ncline}[naput]{oplus,c $X$,y}
  \ncline[style=Arrow]{c}{oplus}
	\psline[style=Dash](2,2.75)(4,2.75)
	\psline[style=Dash](2,2.25)(2,2.75)
	\psline[style=Dash](4,2.25)(4,2.75)
	\psline[style=Dash](1.75,3.25)(7.25,3.25)
	\psline[style=Dash](1.75,2.25)(1.75,3.25)
	\psline[style=Dash](7.25,2.25)(7.25,3.25)
 	\rput*(3,2.75){\scriptsize{$\mutinf{\hat{X};X}$}}
	\rput*(4.5,3.25){\scriptsize{$\mutinf{\hat{X};Y}$}}
\end{pspicture}
\caption{Equivalent model for computing the information loss of an input-output system with static nonlinearity $g(\cdot)$. $Q(\cdot)$ is a quantizer with quantization step size $\Delta_{\hat{x}}$. Note that $X$ can be modeled as the sum of $\hat{X}$ and an input-dependent finite-support noise term $N$ as in~\cite{Geiger_Conf2011}.}
 \label{fig:sysmod}
\end{figure}

As an input to this system consider a sequence of independent samples, identically distributed with continuous cumulative distribution function (CDF) $F_X(x)$ and probability density function (PDF) $f_X(x)$. Without loss of generality, let the support of this RV be $\dom{X}$, i.e., $f_X(x)$ is positive on $\dom{X}$ and zero elsewhere.

As an immediate consequence of this system model, the conditional PDF of the output $Y$ given the input $X$ can be written as~\cite{Chi_TransformingDirac}
\begin{equation}
 f_{Y|X}(x,y) = \delta(y-g(x))= \sum_{i\in\indset{y}} \frac{\delta(x-x_i)}{\left|g'\left(x_i\right)\right|} \label{eq:condPDFY}
\end{equation}
where $\delta(\cdot)$ is Dirac's delta distribution, $\indset{y} = \{i: y\in\dom{Y}_i\}$ and $x_i=g_i^{-1}(y)$ for all $i\in\indset{y}$. In other words, $\{x_i\}$ is the preimage of $y$ or the set of roots satisfying $y=g(x)$.
The marginal PDF of $Y$ is thus given as~\cite[pp.~130]{Papoulis_Probability},~\cite{Chi_TransformingDirac}
\begin{IEEEeqnarray}{RCL}
 f_Y(y) &=& \sum_{i\in\indset{y}} \frac{f_{X}(x_i)}{\left|g'\left(x_i\right)\right|}\label{eq:fy}.
\end{IEEEeqnarray}

\section{Information Loss of Static Nonlinearities}
\label{sec:derivation}
In what follows we quantify the information loss induced by $g(\cdot)$, and we show that this information loss is identical to the remaining uncertainty from which interval $\dom{X}_l$ the input $x$ originated after observing the output $y$.
The main contribution of this work is thus concentrated in the following two Theorems.

\begin{lem}\label{lem:XforY}
 Let $g{:}\ \dom{X}\to\dom{Y}$ and $f{:}\ \dom{Y}\to\dom{Z}$ be measurable functions. Further, let $X$ be a continuous RV on $\dom{X}$ and $Y=g(X)$. Then, 
\begin{equation}
\expec{f(y)} = \int_\dom{Y} f(y) f_Y(y) dy = \int_\dom{X} f(g(x)) f_X(x) dx.
\end{equation}
\end{lem}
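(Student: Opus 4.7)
The statement is the standard law of the unconscious statistician specialized to the piecewise-monotone setting of Definition~\ref{def:function}, so my plan is to derive it by a direct change of variables on each monotone piece and then recognize the result as an integral against $f_Y$.

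First I would decompose the right-hand side using the partition $\dom{X} = \bigcup_{l=1}^L \dom{X}_l$. Since the $\dom{X}_l$ are disjoint intervals covering the support of $X$, we have
\begin{equation}
\int_\dom{X} f(g(x)) f_X(x) dx = \sum_{l=1}^L \int_{\dom{X}_l} f(g_l(x)) f_X(x) dx.
\end{equation}
On each $\dom{X}_l$ the map $g_l$ is bijective, differentiable, and strictly monotone onto $\dom{Y}_l$, so the substitution $y = g_l(x)$, $dx = dy/|g_l'(g_l^{-1}(y))|$, is valid and yields
\begin{equation}
\int_{\dom{X}_l} f(g_l(x)) f_X(x) dx = \int_{\dom{Y}_l} f(y) \frac{f_X(g_l^{-1}(y))}{|g'(g_l^{-1}(y))|} dy.
\end{equation}

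Next, I would exchange the sum and the integral. Writing each integral over $\dom{Y}_l$ as an integral over $\dom{Y}$ multiplied by the indicator that $y \in \dom{Y}_l$, the sum over $l$ collapses to a sum over the preimage index set, since $l \in \indset{y}$ precisely when $y \in \dom{Y}_l$, and then $g_l^{-1}(y) = x_l$ is one of the roots in the notation of \eqref{eq:condPDFY}:
\begin{equation}
\sum_{l=1}^L \int_{\dom{Y}_l} f(y) \frac{f_X(g_l^{-1}(y))}{|g'(g_l^{-1}(y))|} dy = \int_\dom{Y} f(y) \sum_{i\in\indset{y}} \frac{f_X(x_i)}{|g'(x_i)|} dy.
\end{equation}
Finally, invoking \eqref{eq:fy}, the inner sum equals $f_Y(y)$, which gives the desired identity.

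The only subtlety worth watching is that the images $\dom{Y}_l$ need not be disjoint, as emphasized after Definition~\ref{def:function}; the counting is nevertheless correct because each $y$ contributes one term per preimage $x_i$, exactly matching the structure of $f_Y(y)$ in \eqref{eq:fy}. Measurability of $f \circ g$ is immediate from the assumptions, so no additional integrability argument is required beyond the piecewise change of variables on intervals where $g_l$ is $C^1$ and strictly monotone.
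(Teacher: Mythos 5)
Your proof is correct for the setting in which the paper actually uses the lemma, but it takes a genuinely different route from the paper's own argument. The paper proves the lemma abstractly in two lines: by the expectation rule for measurable functions (the law of the unconscious statistician, cited from Papoulis, Theorem~5-1), both integrals equal $\expec{f(Y)}$ --- the left one by applying the rule to $f$ and $Y$, the right one by applying it to the composition $f\circ g$, which is measurable because $f$ and $g$ are; no structure of $g$ beyond measurability is needed, which matches the lemma's hypotheses. You instead specialize to Definition~\ref{def:function}: you split $\dom{X}$ into the $L$ monotone pieces, substitute $y=g_l(x)$ on each, exchange the finite sum with the integral, and identify the inner sum over $\indset{y}$ with $f_Y(y)$ via \eqref{eq:fy}; your bookkeeping for the non-disjoint images $\dom{Y}_l$ is correct, and finiteness of $L$ makes the interchange trivial. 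The price is that you prove a narrower statement than the lemma claims, since you need piecewise strict monotonicity, differentiability, and the existence of the density \eqref{eq:fy} (and, strictly speaking, a measure-zero argument at the finitely many points where $g_l'$ may vanish, where the substitution $dx = dy/\left|g'\left(g_l^{-1}(y)\right)\right|$ degenerates). What your approach buys is an explicit, self-contained computation that makes transparent why \eqref{eq:fy} is exactly the density consistent with the identity; what the paper's approach buys is brevity and generality --- it covers arbitrary measurable $g$ and avoids any manipulation of densities. For every invocation of the lemma in the paper your version suffices, but to prove the lemma as stated, the composition-plus-expectation-rule argument is the appropriate one.
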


\begin{IEEEproof}
The proof is based on the fact that for a measurable $g(\cdot)$~\cite[pp.~142, Theorem~5-1]{Papoulis_Probability}
\begin{equation}
 \expec{g(x)} = \int_\dom{X} g(x) f_X(x) dx.\label{eq:expec}
\end{equation}
Lemma~\ref{lem:XforY} follows from~\eqref{eq:expec} since for measurable $g(\cdot)$ and $f(\cdot)$ the composition $(f\circ g)(\cdot)=f(g(\cdot))$ is also measurable.
\end{IEEEproof}

\begin{thm}\label{thm:InfoLoss}
 The information loss induced by a function $g(\cdot)$ satisfying Definition~\ref{def:function} is given as
\begin{equation}
\ent{X|Y}= \int_{\dom{X}} f_X(x) \log \left( \frac{\sum_{i\in\indset{g(x)}} \frac{f_{X}(x_i)}{\left|g'\left(x_i\right)\right|}}{\frac{f_X(x) }{\left|g'\left(x\right)\right|} } \right)dx.\label{eq:informationloss}
\end{equation}
\end{thm}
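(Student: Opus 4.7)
My plan is to first justify why the information loss~\eqref{eq:genloss} reduces to $\ent{X|Y}$ in the sense defined here, and then explicitly evaluate $\ent{X|Y}$ by exploiting the piecewise structure of $g(\cdot)$.

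\textbf{Step 1 (Reduction to $\ent{X|Y}$).} Following Fig.~\ref{fig:sysmod}, let $\hat{X}=Q(X)$ be a uniform quantization of $X$ with step size $\Delta_{\hat{x}}$. Since $\hat{X}$ is a deterministic function of $X$,
\begin{equation}
\mutinf{\hat{X};X}-\mutinf{\hat{X};Y}=\ent{\hat{X}|Y}-\ent{\hat{X}|X}=\ent{\hat{X}|Y}.
\end{equation}
Now I would argue that once $\Delta_{\hat{x}}$ is small enough to resolve the preimage points $\{x_i\}_{i\in\indset{y}}$, knowing $Y=y$ and $\hat{X}$ pinpoints which subdomain $\dom{X}_l$ the input came from. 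Letting $W$ denote the interval index ($W=l$ iff $X\in\dom{X}_l$), this gives $\ent{\hat{X}|Y}\to\ent{W|Y}$ as $\Delta_{\hat{x}}\to 0$; and since conditioning on $Y=y$ turns $X$ into a discrete RV supported on $\{x_i\}_{i\in\indset{y}}$, this limit is exactly the discrete conditional entropy $\ent{X|Y}$ that appears in the statement.

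\textbf{Step 2 (Conditional PMF of $X$ given $Y$).} Using~\eqref{eq:condPDFY} together with Bayes' rule and~\eqref{eq:fy}, the conditional probability of $X=x_i$ given $Y=y$ is
\begin{equation}
\Prob{X=x_i|Y=y}=\frac{f_X(x_i)/|g'(x_i)|}{f_Y(y)},\quad i\in\indset{y}.
\end{equation}

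\textbf{Step 3 (Unfold as an integral over $\dom{Y}$).} By definition,
\begin{equation}
\ent{X|Y}=\int_{\dom{Y}} f_Y(y)\sum_{i\in\indset{y}}\frac{f_X(x_i)/|g'(x_i)|}{f_Y(y)}\log\frac{f_Y(y)}{f_X(x_i)/|g'(x_i)|}\,dy.
\end{equation}
The $f_Y(y)$ factors cancel, leaving a sum over $i\in\indset{y}$ under the integral sign.

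\textbf{Step 4 (Change of variables on each subdomain).} The key algebraic step is to swap the order of summation and integration: because $i\in\indset{y}$ iff $y\in\dom{Y}_i$, the double construct $\int_{\dom{Y}}\sum_{i\in\indset{y}}(\cdot)\,dy$ equals $\sum_{l=1}^L\int_{\dom{Y}_l}(\cdot)\,dy$, where $x_l=g_l^{-1}(y)$. On each piece I would substitute $y=g_l(x)$ with $dy=|g'(x)|\,dx$ (justified by Lemma~\ref{lem:XforY} applied to $g_l$, which is a bijection between $\dom{X}_l$ and $\dom{Y}_l$). The Jacobian $|g'(x)|$ cancels the denominator $|g'(x_l)|$ up to one factor, producing $f_X(x)$ as the integrating density on $\dom{X}_l$, and turning $f_Y(g(x))$ in the numerator of the logarithm back into $\sum_{i\in\indset{g(x)}}f_X(x_i)/|g'(x_i)|$ via~\eqref{eq:fy}. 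Uniting the $L$ integrals back into a single integral over $\dom{X}$ yields~\eqref{eq:informationloss}.

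\textbf{Anticipated obstacle.} The calculus manipulations in Steps 3--4 are mechanical once the change-of-variables bookkeeping is done cleanly. The real subtlety lies in Step 1: $\ent{X|Y}$ for jointly continuous $(X,Y)$ with a Dirac-supported conditional is not well-defined as a differential entropy, so one must carefully identify it with the \emph{discrete} entropy of the preimage selector $W$ in the limit $\Delta_{\hat{x}}\to 0$. I would make this rigorous by exhibiting the eventual equality $\ent{\hat{X}|Y}=\ent{W|Y}$ for all sufficiently small $\Delta_{\hat{x}}$ (the fine quantization partitions each $\dom{X}_l$ independently, and given $Y$ only one cell per subdomain has positive mass), which bypasses any delicate limiting argument.
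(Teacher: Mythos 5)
Your proposal is correct, but it reaches \eqref{eq:informationloss} by a route that differs from the paper's own proof of Theorem~\ref{thm:InfoLoss}: what you do in Steps 2--4 (conditional probabilities $\frac{f_X(x_i)}{|g'(x_i)|f_Y(y)}$ of the preimage points given $Y=y$, swapping $\int_{\dom{Y}}\sum_{i\in\indset{y}}$ into $\sum_l\int_{\dom{Y}_l}$, and substituting $y=g_l(x)$ on each subdomain) is essentially the paper's proof of the \emph{Main} Theorem~\ref{thm:equivToRoots}, i.e., the computation of $\ent{W|Y}$, so your argument establishes Theorems~\ref{thm:InfoLoss} and~\ref{thm:equivToRoots} in one stroke. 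The paper instead proves Theorem~\ref{thm:InfoLoss} directly by writing $\ent{X|Y}=\lim_{\hat X\to X}\bigl(\mutinf{\hat{X};X}-\mutinf{\hat{X};Y}\bigr)$, subtracting the two mutual-information integrals, letting the conditional densities degenerate to Dirac deltas inside the logarithm (Eq.~\eqref{eq:mutinf2}), and justifying the exchange of limit and integration via dominated convergence in the Appendix; your Step~1 buys a cleaner reduction, since with quantization thresholds placed at the endpoints of the $\dom{X}_l$ each cell meets at most one preimage point of any $y$, giving the \emph{exact} identity $\ent{\hat X|Y}=\ent{W|Y}$ for every such quantizer and thus avoiding delta-function manipulations altogether (the paper itself hints at this when it remarks, before Theorem~\ref{thm:equivToRoots}, that sure convergence of $\hat X$ is unnecessary if the bins are chosen appropriately). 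The only point to state explicitly is the one your ``anticipated obstacle'' already flags: the quantity $\ent{X|Y}$ in the statement is defined through the limit of quantizations, so you should either argue that the limit does not depend on the particular refining sequence (so that you may restrict to sequences whose thresholds include the subdomain endpoints) or note, as you do, that conditioned on $Y=y$ the variable $X$ is genuinely discrete on the preimage, so the averaged discrete conditional entropy is the natural and consistent meaning of $\ent{X|Y}$; this is at the same level of rigor as the paper's own Appendix argument, which likewise assumes the subdomain endpoints are quantization thresholds.
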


\begin{IEEEproof}
Using identities from~\cite{Cover_Information2} and the model in Fig.~\ref{fig:sysmod} the conditional entropy $\ent{X|Y}$ can be calculated as
\begin{IEEEeqnarray}{RCL}
 \ent{X|Y}&=& \lim_{\hat{X}\rightarrow X}\left(\ent{\hat{X}|Y}-\ent{\hat{X}|X}\right)\notag\\
 &=& \lim_{\hat{X}\rightarrow X}\left(\ent{\hat{X}}-\ent{\hat{X}|X}-\ent{\hat{X}}+\ent{\hat{X}|Y}\right)\notag\\ 
&=& \lim_{\hat{X}\rightarrow X} \left(\mutinf{\hat{X};X} - \mutinf{\hat{X};Y}\right).\label{eq:limsub}
\end{IEEEeqnarray}
where $\hat{X}$ is a discrete RV converging surely to $X$. This auxillary RV is necessary to ensure that the (discrete) entropies we use are well-defined.
Here, motivated by the data processing inequality~\cite[pp.~34]{Cover_Information2}, we have related the conditional entropy to a difference of mutual informations, which we have introduced as the most general notion of information loss in Section~\ref{sec:intro}. In addition to that, the mutual information has the benefit that it is defined for general joint distributions~\cite[pp.~252]{Cover_Information2}, which eliminates the requirement for a discrete $\hat{X}$.

For the mutual information between $X$ and $\hat{X}$ we can write with~\cite[pp.~251]{Cover_Information2}
\begin{equation}
 \mutinf{\hat{X},X} = \int_\dom{X}\int_{\dom{X}} f_{\hat{X}X}(\hat{x},x) \log\left(\frac{f_{X|\hat{X}}(\hat{x},x)}{f_X(x)}\right)dxd\hat{x}. \label{eq:mutinf3}
\end{equation}
Similarily, with Lemma~\ref{lem:XforY} (the logarithm and all PDFs are measurable) we get for $\mutinf{\hat{X},Y}$
\begin{IEEEeqnarray}{RCL}
 \mutinf{\hat{X};Y} &=& \int_\dom{X} \int_\dom{X} f_{\hat{X}X}(\hat{x},x) \log \left(\frac{f_{Y|\hat{X}}(\hat{x},g(x))}{f_Y(g(x))}\right)  dxd\hat{x}.\notag\\\label{eq:mutinf1}
\end{IEEEeqnarray}
After subtracting these expressions according to~\eqref{eq:limsub} we can exchange limit and integration (see Appendix). In the limit the conditional PDFs assume $f_{X|\hat{X}}(\hat{x},x) = \delta(x-\hat{x})$ and $f_{Y|\hat{X}}(\cdot,\cdot)=f_{Y|X}(\cdot,\cdot)$, thus~\eqref{eq:condPDFY}, and using these we obtain~\eqref{eq:mutinf2} at the bottom of the next page.
\begin{figure*}[!b]
 \normalsize
\hrulefill
\begin{IEEEeqnarray}{RCL}
 \ent{X|Y}=\lim_{\hat{X}\rightarrow X} \left(\mutinf{\hat{X};X} - \mutinf{\hat{X};Y}\right)
&=&  \int_\dom{X}\int_{\dom{X}} f_X(x) \delta(x-\hat{x}) \log \left( 
\frac{\delta(x-\hat{x}) f_Y(g(x))}{f_X(x)\sum_{k\in\indset{g(x)}} \frac{\delta(\hat{x}-x_k)}{\left|g'\left(x_k\right)\right|}} \right)d\hat{x}dx \label{eq:mutinf2}
\end{IEEEeqnarray}
\end{figure*}
Since the integral over $\hat{x}$ is zero for $\hat{x}\neq x$ due to $\delta(x-\hat{x})$, only the term satisfying $x_k=x$ remains from the sum over Dirac's deltas in the denominator; this term cancels with the delta in the numerator. Integrating over $\hat{x}$ and substituting~\eqref{eq:fy} for $f_Y(\cdot)$ finally yields
\begin{equation}
\ent{X|Y}= \int_{\dom{X}} f_X(x) \log \left( \frac{\sum_{i\in\indset{g(x)}} \frac{f_{X}(x_i)}{\left|g'\left(x_i\right)\right|}}{\frac{f_X(x) }{\left|g'\left(x\right)\right|} } \right)dx\label{eq:equivoc}
\end{equation}
and completes the proof.

\end{IEEEproof}
Note that for $\hat{X}\to X$ both $\mutinf{\hat{X};X}$ and $\mutinf{\hat{X};Y}$ diverge to infinity, but their difference not necessarily does. Further, if for all $y=g(x)$ the preimage is a singleton ($|\indset{g(x)}|=1$ for all $x\in\dom{X}$), $g(\cdot)$ is injective (thus bijective by Definition~\ref{def:function}) and the information loss $\ent{X|Y}=0$.

In Theorem~\ref{thm:InfoLoss} we provided an expression to calculate the information loss induced by a static nonlinearity. The following Theorem is of a different nature: It gives an explanation of \emph{why} information is lost at all. Considering non-injective functions $g(\cdot)$ satisfying Definition~\ref{def:function}, multiple input values may lead to the same output -- the preimage of $y$ may contain multiple elements. Given the output, the input is uncertain only w.r.t. which of these elements has been fed into the system under consideration. Due to the piecewise strict monotonicity of $g(\cdot)$ each subdomain contains at most one element of the preimage of $y$. Therefore, the information loss is identical to the uncertainty about the interval $\dom{X}_l$ from which the input $x$ originated given the output value $y$. Before making this statement precise in Theorem~\ref{thm:equivToRoots}, let us introduce the following Definition:
\begin{definition}
 Let $W$ be a discrete RV with $|\dom{W}|=L$ mass points which is defined as
\begin{equation}
 W=w_i \ \ \text{if} \ \ x\in\dom{X}_i \label{eq:defW}
\end{equation}
for all $i=1,\dots,L$.
\end{definition}
In other words, $W$ is a discrete RV which depends on the interval $\dom{X}_l$ of $x$, and not on its actual value.
As an immediate consequence of this Definition we obtain
\begin{equation}
 \Prob{W=w_i} = p(w_i) = \int_{\dom{X}_i} f_X(x) dx
\end{equation}
i.e., the probability mass contained in the $i$-th interval.

In accordance with the model in Fig.~\ref{fig:sysmod} and the reasoning in the Appendix, one can think of $W$ as $\hat{X}$ when the quantization bins are identical to $\dom{X}_l$. While in Theorem~\ref{thm:InfoLoss} we required $\hat{X}$ to converge to $X$ surely, the next Theorem shows that indeed such a convergence is not necessary as long as the quantization bins are chosen appropriately. This fact will then establish the link between the non-injectivity, piecewise strict monotonicity on intervals, and information loss.
We are now ready to state the main Theorem:

\begin{thm}[Main Theorem]
\label{thm:equivToRoots}
 The uncertainty about the input value $x$ after observing the output $y$ is identical to the uncertainty about the interval $\dom{X}_l$ from which the input was taken, i.e.,
\begin{equation}
 \ent{X|Y}=\ent{W|Y}.
\end{equation}
\end{thm}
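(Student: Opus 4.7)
The plan is to compute $\ent{W|Y}$ directly from the standard definition of the entropy of a discrete random variable conditioned on a continuous one, and to match the resulting expression to \eqref{eq:informationloss}. This sidesteps the limiting construction used in Theorem~\ref{thm:InfoLoss} by relying on the already-proved formula for $\ent{X|Y}$.

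First I would derive the joint distribution of $W$ and $Y$. Conditional on $W=w_i$, the random variable $X$ has density $f_X(x)/p(w_i)$ supported on $\dom{X}_i$, and $Y=g_i(X)$ with $g_i$ bijective onto $\dom{Y}_i$, so a single-branch change of variables yields
\begin{equation}
  f_{Y,W}(y,w_i) = \frac{f_X(x_i)}{|g'(x_i)|}, \qquad x_i = g_i^{-1}(y), \quad y\in\dom{Y}_i,
\end{equation}
and zero otherwise. Summing over $i\in\indset{y}$ recovers the marginal \eqref{eq:fy}, which confirms consistency and gives
\begin{equation}
  p(w_i|y) = \frac{f_X(x_i)/|g'(x_i)|}{\sum_{j\in\indset{y}} f_X(x_j)/|g'(x_j)|}
\end{equation}
for $i\in\indset{y}$.

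Next I would plug these into $\ent{W|Y} = -\int_\dom{Y} \sum_{i\in\indset{y}} f_{Y,W}(y,w_i)\log p(w_i|y)\,dy$ and split the sum outside the integral so that the $i$-th term integrates over $\dom{Y}_i$ only. On each such piece I substitute $y=g_i(x)$ with $dy=|g'(x)|dx$ for $x\in\dom{X}_i$; the factor $|g'(x_i)|$ in the denominator of $f_{Y,W}$ cancels against the Jacobian, so the measure reduces to $f_X(x)\,dx$. The logarithm, being a function of $y=g(x)$ through the preimage set, is unchanged by the substitution because $\{x_j: j\in\indset{g(x)}\}$ always includes the chosen root $x$ itself. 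Summing the $L$ pieces unites the subdomains into $\dom{X}$ and produces exactly the right-hand side of \eqref{eq:informationloss}.

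The main obstacle is purely bookkeeping: keeping the preimage index set $\indset{g(x)}$ consistent across the change of variables on each subdomain, and verifying that the "self" term (the one with $x_j=x$) appears in both numerator and denominator of the log for every $x$, so that the resulting integrand agrees with that of Theorem~\ref{thm:InfoLoss}. A slicker but less rigorous alternative would be to invoke the chain rule $\ent{X|Y}=\ent{X,W|Y}=\ent{W|Y}+\ent{X|W,Y}$ and argue that the last term vanishes because $W$ and $Y$ jointly determine $X$ via $x=g_W^{-1}(Y)$; however, justifying this in the mixed continuous/discrete limiting framework of Theorem~\ref{thm:InfoLoss} is more delicate than the direct computation, so I would prefer the latter.
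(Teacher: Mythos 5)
Your proposal is correct and follows essentially the same route as the paper's proof: compute $p(w_i|y)$ (you do this via a single-branch change of variables for the joint of $W$ and $Y$, the paper via the Dirac-delta form of $f_{Y|X}$, yielding the identical expression \eqref{eq:condpwy}), insert it into the mixed discrete/continuous conditional entropy, substitute $y=g_l(x)$ branch by branch so the Jacobian cancels, and recognize the resulting integral as \eqref{eq:informationloss} from Theorem~\ref{thm:InfoLoss}. No gaps; the alternative chain-rule argument you mention is indeed what the paper states only \emph{after} the theorem, as a consequence rather than a proof.
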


\begin{IEEEproof}
Permitting continuous observations $Y$ of a discrete random variable $W$, i.e., $\dom{Y}\subseteq\mathbb{R}$, we can write for the conditional entropy~\cite{Feder_EntropyError,Chu_InequalitiesErrorProb}
\begin{IEEEeqnarray}{RCL}
 \ent{W|Y} &=& \int_\dom{Y} \ent{W|Y=y} f_Y(y)dy\\
&=& -\int_\dom{Y} \sum_{i=1}^{L} p(w_i|y)\log p(w_i|y) f_Y(y)dy .\label{eq:condMixed}
\end{IEEEeqnarray}

The conditional probability of $W$ given $Y$, $p(w_i|y)=\Prob{W=w_i|Y=y}$, can be calculated from~\eqref{eq:defW} as
\begin{IEEEeqnarray}{RCL}
 p(w_i|y) &=& \int_{\dom{X}_i} f_{X|Y}(x,y) dx\\
&=& \frac{1}{f_Y(y) } \int_{\dom{X}_i} f_{Y|X}(x,y)f_X(x) dx\\
 &=& \frac{1}{f_Y(y) }\sum_{k\in\indset{y}} \int_{\dom{X}_i}  \frac{\delta(x-x_k)}{|g'(x_k)|}f_X(x) dx
\end{IEEEeqnarray}
where we made use of~\eqref{eq:condPDFY} and exchanged the order of summation and integration. Due to piecewise strict monotonicity of the function $g(\cdot)$, each interval contains at most one element of the preimage of $y$. Thus, if $i\in\indset{y}$, this element is given by $g_i^{-1}(y)$ and
\begin{IEEEeqnarray}{RCL}
 p(w_i|y) &=& \frac{1}{f_Y(y) } \int_{\dom{X}_i} \frac{\delta(x-g_i^{-1}(y))}{|g'(g_i^{-1}(y))|}f_X(x) dx\\
 &=& \frac{f_X(g_i^{-1}(y))}{|g'(g_i^{-1}(y))|f_Y(y) } .\label{eq:condpwy}
\end{IEEEeqnarray}
Conversely, if $i\notin\indset{y}$, we obtain $p(w_i|y)=0$.

We are now ready to compute the conditional entropy of $W$ given $Y$, i.e., the uncertainty about the interval from which $x$ was drawn:
\begin{IEEEeqnarray}{RCL}
 \ent{W|Y} &=& -\int_\dom{Y} \sum_{i=1}^{L} p(w_i|y) \log p(w_i|y)f_Y(y) dy\\
 &=& -\int_\dom{Y} \sum_{i\in\indset{y}} \frac{f_X(x_i)}{|g'(x_i)|} \log\left(\frac{f_X(x_i)}{|g'(x_i)|f_Y(y) }\right) dy\notag \\
\end{IEEEeqnarray}
where we used $x_i=g_i^{-1}(y)$. We can now substitute $x=g_l^{-1}(y)$ for all $l=1,\dots,L$ where in each $\dom{X}_l$ only a single root remains. With $g_l(x)=g(x)$ on $\dom{X}_l$ we thus obtain
\begin{IEEEeqnarray}{RCL}
  \ent{W|Y} &=& -\sum_l \int_{\dom{X}_l} \frac{f_X(x)}{|g'(x)|}|g'(x)| \log\left(\frac{\frac{f_X(x)}{|g'(x)|}}{f_Y(g(x)) }\right) dx\notag\\
 &=& \int_{\dom{X}} f_X(x) \log\left(\frac{ \sum_{k\in\indset{g(x)}} \frac{f_X(x_k)}{|g'(x_k)|}}{\frac{f_X(x)}{|g'(x)|}}\right) dx\notag \\
\end{IEEEeqnarray}
where we replaced $f_Y(g(x))$ by~\eqref{eq:fy} and collapsed the sum of integrals over $\dom{X}_l$ to a single integral over $\dom{X}$. Since this result is identical to~\eqref{eq:informationloss}, i.e.,
\begin{equation}
 \ent{W|Y}=\ent{X|Y}
\end{equation}
the proof is complete.
\end{IEEEproof}

The information loss induced by a function satisfying Definition~\ref{def:function} is thus only related to the roots of the equation $y=g(x)$. Conversely, if the interval $\dom{X}_l$ of $x$ is known, the exact value of $x$ can be reconstructed after observing $y$: \begin{IEEEeqnarray}{RCL}
\ent{X|Y} &=& \ent{X,W|Y}\\
&=& \ent{X|W,Y}+\ent{W|Y}\\
&=&\ent{X|W,Y}+\ent{X|Y}
\end{IEEEeqnarray}
and thus $\ent{X|W,Y}=0$.

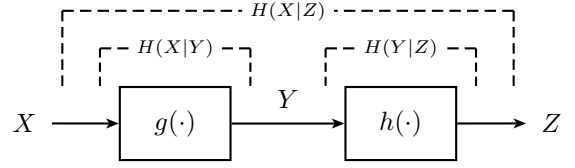
\begin{figure}[t]
 \centering
\begin{pspicture}[showgrid=false](1,1.5)(8,4)
 	\pssignal(1,2){x}{$X$}
	\psfblock[framesize=1.5 1](3,2){d}{$g(\cdot)$}
	\psfblock[framesize=1.5 1](6,2){c}{$h(\cdot)$}
	\pssignal(8,2){y}{$Z$}
  \nclist[style=Arrow]{ncline}[naput]{x,d,c $Y$,y}
	\psline[style=Dash](2,3)(4,3)
	\psline[style=Dash](2,2.5)(2,3)
	\psline[style=Dash](4,2.5)(4,3)
	\psline[style=Dash](5,3)(7,3)
	\psline[style=Dash](5,2.5)(5,3)
	\psline[style=Dash](7,2.5)(7,3)
	\psline[style=Dash](1.5,3.5)(7.5,3.5)
	\psline[style=Dash](1.5,2.5)(1.5,3.5)
	\psline[style=Dash](7.5,2.5)(7.5,3.5)
 	\rput*(3,3){\scriptsize\textcolor{black}{$\ent{X|Y}$}}
	\rput*(6,3){\scriptsize\textcolor{black}{$\ent{Y|Z}$}}
	\rput*(4.5,3.5){\scriptsize\textcolor{black}{$\ent{X|Z}$}}
\end{pspicture}
 \caption{Cascade of systems}
 \label{fig:cascade}
\end{figure}

Aside from the properties of conditional entropies (non-negativity~\cite[pp.~15]{Cover_Information2}, asymmetry in its arguments, etc.) the information loss has an important property concerning the cascade of deterministic, static systems, which is not shared by the conditional entropy in general. For such a cascade (see Fig.~\ref{fig:cascade}), which in the static case is equivalent to a composition of the implemented functions, we can prove the following Theorem:

\begin{thm}\label{thm:transitivity}
 Consider two functions $g{:}\ \dom{X}\to\dom{Y}$ and $h{:}\ \dom{Y}\to\dom{Z}$ satisfying Definition~\ref{def:function} and a cascade of systems implementing these functions, as shown in Fig.~\ref{fig:cascade}. Let $Y=g(X)$ and $Z=h(Y)$. Then, the information loss induced by this cascade, or equivalently, by the composition $(h\circ g)(\cdot)=h(g(\cdot))$ is given by:
\begin{equation}
\ent{X|Z}=\ent{X|Y}+\ent{Y|Z}
\end{equation}
\end{thm}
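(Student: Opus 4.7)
The plan is to reduce the statement to the chain rule for discrete conditional entropies by invoking Theorem~\ref{thm:equivToRoots} on each of the three functions $g$, $h$, and $h\circ g$. The key observation is that identifying the monotonicity interval of $h\circ g$ containing $X$ is equivalent to identifying both the $g$-interval of $X$ \emph{and} the $h$-interval of $Y=g(X)$, which lets the two information losses split cleanly.

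First I would verify that $h\circ g$ itself satisfies Definition~\ref{def:function}. With $\{\dom{X}_l\}_{l=1}^L$ the monotonicity intervals of $g$ and $\{\dom{Y}_m\}_{m=1}^M$ those of $h$, the pullbacks $g_l^{-1}(g(\dom{X}_l)\cap\dom{Y}_m)$, whenever non-empty, form a finite partition of $\dom{X}$ into at most $LM$ disjoint intervals on each of which $h\circ g$ is strictly monotone and differentiable; boundedness, surjectivity onto $\dom{Z}$, and Borel measurability are immediate from the corresponding properties of $g$ and $h$. Introducing the interval-indicator RVs $W_g$, $W_h$, and $W_{hg}$ associated with $g$, $h$, and $h\circ g$ as in~\eqref{eq:defW}, Theorem~\ref{thm:equivToRoots} simultaneously gives $\ent{X|Y}=\ent{W_g|Y}$, $\ent{Y|Z}=\ent{W_h|Z}$, and $\ent{X|Z}=\ent{W_{hg}|Z}$.

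Next, the one-to-one correspondence between the monotonicity intervals of $h\circ g$ and the non-empty pairs $(l,m)$ constructed above means that $W_{hg}$ is a bijective function of $(W_g,W_h)$, so $\ent{W_{hg}|Z}=\ent{W_g,W_h|Z}$. The discrete chain rule then yields
\begin{equation}
\ent{W_g,W_h|Z}=\ent{W_h|Z}+\ent{W_g|W_h,Z}.
\end{equation}
Finally, since $h$ restricted to the interval indexed by $W_h$ is a bijection onto its image, the pair $(W_h,Z)$ determines $Y$ via $Y=h_{W_h}^{-1}(Z)$, while $Y$ clearly determines $W_h$ and $Z$. Conditioning on $(W_h,Z)$ is therefore equivalent to conditioning on $Y$, so $\ent{W_g|W_h,Z}=\ent{W_g|Y}=\ent{X|Y}$, and assembling the three identities gives $\ent{X|Z}=\ent{X|Y}+\ent{Y|Z}$.

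The main obstacle is making the identification $W_{hg}\equiv(W_g,W_h)$ fully rigorous and, more subtly, justifying that conditioning the discrete RV $W_g$ on the pair $(W_h,Z)$ produces the same entropy as conditioning it on the continuous RV $Y$. The latter rests on the fact that $(W_h,Z)$ and $Y$ generate the same $\sigma$-algebra up to null sets, which requires a careful appeal to the bijectivity of each $h_m(\cdot)$ on the interval enclosure of $\dom{Y}_m$ and to the absence of atoms in the distribution of $Y$ on the interval boundaries.
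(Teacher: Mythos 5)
Your proof is correct, but it takes a genuinely different and considerably heavier route than the paper's. The paper disposes of this theorem in three lines without ever touching the interval structure: it expands $\ent{X,Y|Z}=\ent{X|Y,Z}+\ent{Y|Z}$, uses the Markov chain $X\to Y\to Z$ to replace $\ent{X|Y,Z}$ by $\ent{X|Y}$, and observes that $\ent{X,Y|Z}=\ent{X|Z}$ because $Y=g(X)$ is a deterministic function of $X$; neither Definition~\ref{def:function}, nor Theorem~\ref{thm:equivToRoots}, nor any property of the composition is invoked. Your route instead funnels everything through Theorem~\ref{thm:equivToRoots}: you must first show that $h\circ g$ admits a Definition~\ref{def:function} representation, then identify $W_{hg}$ with the pair $(W_g,W_h)$, apply the discrete chain rule, and argue that conditioning on $(W_h,Z)$ is equivalent to conditioning on $Y$. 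What this buys you is that every entropy you manipulate is a mixed conditional entropy of a genuinely discrete variable given an observation, which is unambiguously defined, whereas the paper's manipulations of $\ent{X|Y,Z}$ and $\ent{X,Y|Z}$ for continuous $X,Y$ implicitly rely on the conditional distributions being concentrated on finite preimages for these quantities to make sense; as a by-product you also establish that the cascade itself is a system covered by Definition~\ref{def:function}, which justifies reading $\ent{X|Z}$ as the information loss of the composition (a point the paper asserts but does not prove). What it costs is exactly the bookkeeping you flag yourself: degenerate single-point pullback pieces, differentiability of $h_m\circ g_l$ at endpoints of the refined intervals, and the null-set identification of the $\sigma$-algebras generated by $(W_h,Z)$ and by $Y$ all need care, while the paper's Markov-chain argument is immune to these issues and extends verbatim to arbitrary measurable $g$ and $h$ for which the conditional entropies are finite.
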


\begin{IEEEproof}
The proof starts by expanding $\ent{X,Y|Z}$
 \begin{IEEEeqnarray*}{RCL}
 	\ent{X,Y|Z} &=& \ent{X|Y,Z} + \ent{Y|Z}\\
&=& \ent{X|Y} + \ent{Y|Z}
\end{IEEEeqnarray*}
since $X \to Y \to Z$ forms a Markov chain and thus $X$ and $Z$ are independent given $Y$~\cite{Cover_Information2}. Further, $\ent{X,Y|Z}=\ent{X|Z}$ since $Y$ is a function of $X$. Thus,
\begin{equation}
\ent{X|Z}=\ent{X|Y}+\ent{Y|Z}
\end{equation}
and the proof is complete.
\end{IEEEproof}

Extending Theorem~\ref{thm:transitivity}, we obtain the following Corollary:
\begin{cor}
Consider a set of functions $g_i{:}\ \dom{X}_{i-1}\to\dom{X}_{i}$, $i=1,\dots,N$, each satisfying Definition~\ref{def:function}, and a cascade of systems implementing these functions. Let $X_i$, $i=1,2,\dots,N$, denote the output of the $i$th constituent system and, thus, the input of the $(i+1)$th system. Given the input of the first system, $X_0$, we have
\begin{equation}
 \ent{X_0|X_N} = \sum_{i=1}^{N} \ent{X_{i-1}|X_i}.
\end{equation}
\end{cor}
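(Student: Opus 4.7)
The plan is to prove the corollary by induction on $N$, using Theorem~\ref{thm:transitivity} as the essential tool at each step.

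The base case $N=1$ is a tautology, and $N=2$ is exactly Theorem~\ref{thm:transitivity}. For the inductive step, assume the formula holds for a cascade of $N-1$ systems and consider a cascade of length $N$. The key idea is to group the first $N-1$ constituent systems into a single composite system that maps $X_0$ directly to $X_{N-1}$, implementing the composition $(g_{N-1}\circ\cdots\circ g_1)(\cdot)$, and then treat the overall cascade as a two-stage system with stages $X_0\to X_{N-1}$ and $X_{N-1}\to X_N=g_N(X_{N-1})$. Applying Theorem~\ref{thm:transitivity} to this two-stage decomposition yields
\begin{equation}
\ent{X_0|X_N}=\ent{X_0|X_{N-1}}+\ent{X_{N-1}|X_N},
\end{equation}
and the induction hypothesis converts $\ent{X_0|X_{N-1}}$ into $\sum_{i=1}^{N-1}\ent{X_{i-1}|X_i}$, closing the induction.

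The only step that needs justification is the applicability of Theorem~\ref{thm:transitivity} in the inductive step, which nominally requires both stages to satisfy Definition~\ref{def:function}. For the second stage this is immediate from the hypothesis on $g_N$. For the composite first stage we need that a finite composition of functions satisfying Definition~\ref{def:function} again satisfies Definition~\ref{def:function}: boundedness and surjectivity onto $\dom{X}_{N-1}$ are inherited, Borel measurability is preserved under composition, and piecewise strict monotonicity on a finite refinement of the subdomains follows by partitioning each monotonicity interval of the outer function according to the preimages of the inner functions (which produces only finitely many new intervals, each with a differentiable restriction by the chain rule). Inspecting the proof of Theorem~\ref{thm:transitivity} shows that in fact only the Markov property $X_0\to X_{N-1}\to X_N$ and the functional relation $X_{N-1}=(g_{N-1}\circ\cdots\circ g_1)(X_0)$ are used, so this verification is really only needed for consistency with how Theorem~\ref{thm:transitivity} is stated.

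The main obstacle is therefore notational rather than mathematical: being precise about the composite first stage and confirming that Definition~\ref{def:function} is preserved under composition. Once that is in place, the induction itself is a one-line application of Theorem~\ref{thm:transitivity} plus the inductive hypothesis.
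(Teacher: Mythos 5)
Your proof is correct and follows essentially the same route as the paper, whose proof is simply ``repeated application of Theorem~\ref{thm:transitivity}''; your induction with the composite first stage is just a formalization of that, and your extra check that Definition~\ref{def:function} (or, as you note, merely the Markov/functional structure actually used in the proof of Theorem~\ref{thm:transitivity}) survives composition fills in a detail the paper leaves implicit.
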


\begin{proof}
 The Corollary is proved by repeated application of Theorem~\ref{thm:transitivity}.
\end{proof}

This result does not imply that the order in which the functions can be arranged has no influence on the information loss of the cascade, as one would expect from stable, linear systems. Illustrative examples showing that this does not hold can be found, e.g., in~\cite{Johnson_ITNeural}. Moreover, calculating the individual information losses requires in each case the PDF of the input to the function under consideration. While this does not seem to yield an improvement compared to a direct evaluation of~\eqref{eq:informationloss}, Theorem~\ref{thm:transitivity} can be used to bound the information loss of the cascade efficiently whenever bounds on the individual information losses are available. We will introduce such bounds in the next Section.

\section{Upper Bounds on the Information Loss}\label{sec:bounds}
In many situations it might be inconvenient, or even impossible, to evaluate the information loss~\eqref{eq:informationloss} analytically since it involves the logarithm of a sum, for which only inequalities exist~\cite{Cover_Information2}. Therefore, one has to resort to numeric integration or use bounds on the information loss which are simpler to evaluate. In this Section we derive an upper bound which requires only minor knowledge about the function $g(\cdot)$ -- namely, the number of intervals $L$ -- and we show that this bound is tight.

\begin{figure*}[!ht]
 \centering
	\begin{pspicture}[showgrid=false](-6.0,-1)(6.0,4)
		\psaxeslabels{->}(-4,0)(-4.2,-0.2)(4,4){$x$}{$F_X(x)$, \textcolor{red}{$g(x)$}, \textcolor{blue}{$h(x)$}}
		\psplot[style=Graph]{-3}{3}{x 60 mul sin x add 3 div 1 add} \psplot[style=Graph]{-4}{-3}{0} \psplot[style=Graph]{3}{3.5}{2}
		\psplot[style=Graph,linecolor=red]{-3}{-0.5}{x 60 mul sin x add -3 div 0.33 sub}
    \psplot[style=Graph,linecolor=red]{-0.5}{0.5}{x 60 mul sin x add 3 div 0.33 add}
		\psplot[style=Graph,linecolor=red]{0.5}{3}{x 60 mul sin x add 3 div 0.33 sub}\psplot[style=Graph,linecolor=red]{-4}{-3}{0.67} \psplot[style=Graph,linecolor=red]{3}{3.5}{0.67} 
		\psplot[style=Graph,linecolor=blue]{-3}{-0.5}{x 60 mul sin x add 3 div 3 add}
    \psplot[style=Graph,linecolor=blue]{-0.5}{0.5}{x 60 mul sin x add 3 div 2.5 add}
		\psplot[style=Graph,linecolor=blue]{0.5}{3}{x 60 mul sin x add -3 div 3.5 add}\psplot[style=Graph,linecolor=blue]{-4}{-3}{2} \psplot[style=Graph,linecolor=blue]{3}{3.5}{2.5}
		\psTick{0}(-4,2) \rput[tr](-4.2,2){$1$}
		\psset{braceWidthInner=5pt,braceWidthOuter=5pt,braceWidth=0.1pt}
		\psbrace[ref=rC,linewidth=0.1pt,rot=180](-4,0.67)(-4,0){\textcolor{red}{$\dom{Z}$}}
		\psbrace[ref=rC,rot=180,linewidth=0.1pt](-4,2.67)(-4,2){\textcolor{blue}{$\dom{Y}_1$}}
		\psbrace[ref=lC,linewidth=0.1pt](0.5,2.17)(0.5,2.83){\textcolor{blue}{$\dom{Y}_2$}}
		\psbrace[ref=lC,linewidth=0.1pt](3.5,2.5)(3.5,3.17){\textcolor{blue}{$\dom{Y}_3$}}
		\psbrace[ref=t,rot=90,linewidth=0.1pt](-4,0)(-0.5,0){$\dom{X}_1$}
		\psbrace[ref=t,rot=90,linewidth=0.1pt](-0.5,0)(0.5,0){$\dom{X}_2$}
		\psbrace[ref=t,rot=90,linewidth=0.1pt](0.5,0)(3.5,0){$\dom{X}_3$}
		\psline[style=Dash,linewidth=0.01, linecolor=gray](0.5,0)(0.5,3.5)
		\psline[style=Dash,linewidth=0.01, linecolor=gray](-0.5,0)(-0.5,3.5)
		\psline[style=Dash,linewidth=0.01](-0.5,2.17)(0.5,2.17) \psline[style=Dash,linewidth=0.01](0.5,3.17)(3.5,3.17)
		\psline[style=Dash,linewidth=0.01](-4,2.67)(-0.5,2.67)
		\psset{linecolor=blue}
		\pszero(-0.5,2.67){p1} \dotnode(-0.5,2.17){d} \pszero(0.5,2.83){p1} \dotnode(0.5,3.17){d}
		\psset{linecolor=red}
		\pszero(0.5,0.67){p1} \dotnode(0.5,0){p1}
	\end{pspicture}
\caption{Piecewise strictly monotone functions with $L=3$ satisfying conditions of Theorem~\ref{thm:UpperBoundLoss}. The function in blue, $h{:}\ \dom{X}\to\dom{Y}$, renders~\eqref{eq:reqForBound} piecewise constant but not constant due to improper setting of the constants $c_l$. Tightness is achieved in the smallest bound,~\eqref{eq:bound1}, only. The function in red, $g{:}\ \dom{X}\to\dom{Z}$, satisfies all conditions (i.e.,~\eqref{eq:reqForBound} is constant and $\dom{Z}_l=\dom{Z}$ for all $l$) and thus achieves equality in the largest bound~\eqref{eq:bound3}. Note that the subdomains $\dom{X}_l$ are chosen such that each subdomain contains the same probability mass.}
\label{fig:satTh4}
\end{figure*}

\begin{thm}\label{thm:UpperBoundLoss}
 The information loss induced by a function $g(\cdot)$ satisfying Definition~\ref{def:function} can be upper bounded by the following ordered set of inequalities:
\begin{IEEEeqnarray}{RCL}
 \ent{X|Y} &\leq& \int_\dom{Y} f_Y(y) \log \left(|\indset{y}|\right) dy \label{eq:bound1}\\
 &\leq&  \log \left(\sum_{l=1}^L \int_{\dom{Y}_l} f_Y(y) dy \right)\label{eq:bound2} \\&\leq& \log L\label{eq:bound3}
\end{IEEEeqnarray}
Bound~\eqref{eq:bound1} holds with equality if and only if
\begin{equation}
 \sum_{k\in\indset{g(x)}} \frac{f_X(x_k)}{|g'(x_k)|}\frac{|g'(x)|}{f_X(x)} \label{eq:reqForBound}
\end{equation}
is piecewise constant. If this expression is constant for all $x\in\dom{X}$, bound~\eqref{eq:bound2} is tight. Bound~\eqref{eq:bound3} holds with equality if and only if additionally $\dom{Y}_l=\dom{Y}$ for all $l=1,\dots,L$, and thus~\eqref{eq:reqForBound} evaluates to $L$.
\end{thm}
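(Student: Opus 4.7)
The plan is to leverage Theorem~\ref{thm:equivToRoots}, which rewrites the information loss as $\ent{W|Y}$ for the interval-selecting discrete RV $W$. The three inequalities then reduce to classical one-line bounds on a discrete conditional entropy, and the equality conditions are obtained by unpacking each bound back into the form of~\eqref{eq:reqForBound} via~\eqref{eq:condpwy} and~\eqref{eq:fy}.

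For~\eqref{eq:bound1}, the key observation is that by~\eqref{eq:condpwy} the conditional distribution $p(w_i|y)$ vanishes outside $\indset{y}$, so $\ent{W|Y=y}\leq\log|\indset{y}|$ by the maximum-entropy property of the uniform distribution on a set of size $|\indset{y}|$. Integrating against $f_Y$ gives~\eqref{eq:bound1}. Equality holds iff this conditional distribution is uniform on its support, i.e.\ $f_X(x_i)/|g'(x_i)|$ takes the same value for every $i\in\indset{y}$; substituting this uniformity into~\eqref{eq:fy} forces the ratio in~\eqref{eq:reqForBound} to equal $|\indset{g(x)}|$, which is the piecewise constant function claimed by the theorem.

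For~\eqref{eq:bound2}, I would apply Jensen's inequality to the concave $\log$ to pull it outside the integral, and then use the identity $|\indset{y}|=\sum_{l=1}^L \mathbf{1}_{\dom{Y}_l}(y)$ (immediate from the definition of $\indset{y}$) together with Tonelli to rewrite $\int_\dom{Y} f_Y(y)|\indset{y}|dy$ as $\sum_l \int_{\dom{Y}_l} f_Y(y)dy$. Equality in Jensen requires $|\indset{y}|$ to be $f_Y$-a.s.\ constant, which combined with the previous condition collapses~\eqref{eq:reqForBound} to a single constant on all of $\dom{X}$. Finally, \eqref{eq:bound3} is immediate from $\int_{\dom{Y}_l} f_Y\leq 1$ and there being $L$ summands; equality forces each $\int_{\dom{Y}_l} f_Y=1$, which by positivity of $f_Y$ on $\dom{Y}$ requires $\dom{Y}_l=\dom{Y}$ for every $l$, at which point $|\indset{y}|=L$ and~\eqref{eq:reqForBound} must evaluate to $L$.

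The bulk of the technical work will sit in the equality-condition bookkeeping rather than in the inequalities themselves: the three bounds are each short applications of classical inequalities, but converting the discrete-side conditions (``uniform on $\indset{y}$'', ``$|\indset{y}|$ constant'', ``$\dom{Y}_l=\dom{Y}$'') into their continuous-side equivalents in terms of~\eqref{eq:reqForBound} demands careful use of~\eqref{eq:condpwy} and~\eqref{eq:fy}, and I expect particular care to be needed in recognizing that the natural partition underlying the phrase ``piecewise constant'' in the first equality condition is the one induced by the level sets of $|\indset{\cdot}|$, not necessarily the subdomains $\dom{X}_l$ themselves.
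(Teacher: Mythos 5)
Your proposal follows essentially the same route as the paper's proof: reduce to $\ent{W|Y}$ via Theorem~\ref{thm:equivToRoots}, bound $\ent{W|Y=y}$ by $\log|\indset{y}|$ using the maximum-entropy property of the uniform distribution, apply Jensen for the second bound and the trivial count for the third, and recover the equality conditions by unpacking $p(w_i|y)$ through~\eqref{eq:condpwy} and~\eqref{eq:fy}. Your explicit justification of $\int_\dom{Y} f_Y(y)|\indset{y}|\,dy=\sum_l\int_{\dom{Y}_l}f_Y(y)\,dy$ via the indicator decomposition and Tonelli, and your remark that the relevant partition for ``piecewise constant'' is the one induced by the level sets of $|\indset{\cdot}|$, are correct refinements of steps the paper leaves implicit.
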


\begin{IEEEproof}The proof relies upon Theorem~\ref{thm:equivToRoots}, where we established $\ent{X|Y}=\ent{W|Y}$ and thus
\begin{equation}
\ent{X|Y} =\int_\dom{Y} \ent{W|Y=y} f_Y(y)dy.
\end{equation}
The first inequality, based upon the maximum entropy property of the uniform distribution~\cite[pp.~29]{Cover_Information2}, becomes an equality if $p(w_i|y)=\frac{1}{|\indset{y}|}$ for all $i\in\indset{y}$, such that $\ent{W|Y=y}=\log\left(|\indset{y}|\right)$. Combining this with~\eqref{eq:condpwy} we have
\begin{IEEEeqnarray}{RCL}
 p(w_i|y) &=& \frac{f_X(g_i^{-1}(y))}{|g'(g_i^{-1}(y))|f_Y(y)} = \frac{1}{|\indset{y}|}
\end{IEEEeqnarray}
Performing multiplicative inversion and inserting~\eqref{eq:fy} we obtain
\begin{IEEEeqnarray}{RCL}
|\indset{y}| &=& f_Y(y)\frac{|g'(g_i^{-1}(y))|}{f_X(g_i^{-1}(y))}\\
  &=&\sum_{k\in\indset{y}} \frac{f_X(x_k)}{|g'(x_k)|}\frac{|g'(g_i^{-1}(y))|}{f_X(g_i^{-1}(y))}
\end{IEEEeqnarray}
for all $y\in\dom{Y}$ and all $i\in\indset{y}\subseteq\{1,\dots,L\}$. Since generally $|\indset{y}|$ is piecewise constant and independent of $i$ as long as $i\in\indset{y}$, it is immaterial which $i$ from $\indset{y}$ is chosen. Thus, one can exchange $g_i^{-1}(y)$ by $x$ and $\indset{y}$ by $\indset{g(x)}$, which proves the requirement of the first bound.

The second inequality is due to Jensen~\cite[pp.~27]{Cover_Information2}. Equality is achieved if $|\indset{y}|$ is constant for all $y\in\dom{Y}$, or equivalently
\begin{IEEEeqnarray}{RCL}
  \sum_{k\in\indset{g(x)}} \frac{f_X(x_k)}{|g'(x_k)|}\frac{|g'(x)|}{f_X(x)} &=& \text{const.}
\end{IEEEeqnarray}
for all $x\in\dom{X}$.

If the requirements for equality in~\eqref{eq:bound1} and~\eqref{eq:bound2} are met, the information loss is given as $\ent{X|Y}=\log (|\indset{y}|)$ for any $y\in\dom{Y}$. Thus, the last bound,~\eqref{eq:bound3}, is tight if and only if $|\indset{y}|=L$ for all $y\in\dom{Y}$. This requires that $\dom{Y}_l=\dom{Y}$ for all $l=1,\dots,L$ and completes the proof.
\end{IEEEproof}

An example of a function $g(\cdot)$ for which~\eqref{eq:reqForBound} is piecewise constant assumes on each interval $\dom{X}_l$ the cumulative distribution function $F_X(x)$, possibly modified with a sign and an additive constant. In other words, for all $l=1,\dots,L$
\begin{equation}
 g_l(x) = b_lF_X(x) + c_l
\end{equation}
where $b_l\in\{1,-1\}$ and $c_l\in\mathbb{R}$ are arbitrary constants. Such a function $h{:}\ \dom{X}\to\dom{Y}$ is depicted in Fig.~\ref{fig:satTh4}. The constants $c_l$ and the probability masses in each interval are constrained if~\eqref{eq:reqForBound} shall be constant. As a special case, consider this constant to be equal to $L$, which guarantees tightness in the largest bound~\eqref{eq:bound3}. In order that appropriate constants $c_l$ exist, all intervals $\dom{X}_l$ have to contain the same probability mass, i.e.,
\begin{equation}
 \int_{\dom{X}_l} f_X(x) dx = \frac{1}{L}.
\end{equation}
Since equal probability mass in all intervals is a necessary, but not a sufficient condition for equality in~\eqref{eq:bound3} (cf.~Fig.~\ref{fig:satTh4}), the constants $c_l$ have to be set to
\begin{equation}
 c_l = -\sum_{i=1}^{l-1} \int_{\dom{X}_i} f_X(x) dx = -\frac{l-1}{L}\label{eq:cls}
\end{equation}
where we assume that the intervals are ordered and where $b_l=1$ for all $l$. A function $g{:}\ \dom{X}\to\dom{Z}$ satisfying these requirements is shown in Fig.~\ref{fig:satTh4}.

Another example of a function satisfying the tightness conditions of Theorem~\ref{thm:UpperBoundLoss} is given in Example 1 of Section~\ref{sec:examples}.

\section{Examples}
\label{sec:examples}
In this Section, the application of the obtained expression for the information loss and its upper bounds is illustrated. Unless otherwise noted, the logarithm is taken to base $2$.

\subsection{Example 1: Even PDF, Magnitude Function}\label{ssec:ex1}
Consider a continuous RV $X$ with an even PDF, i.e., $f_X(-x)=f_X(x)$. Let the support $\dom{X}=\mathbb{R}$ and let this RV be the input to the magnitude function, i.e.,
\begin{equation}
 g(x)=|x|=
\begin{cases}
 -x,& \text{if }x<0\\x,&\text{if }x\geq0
\end{cases}
.\label{eq:magnitude}
\end{equation}
The magnitude function is piecewise strictly monotone on $\dom{X}_1=(-\infty,0)$ and $\dom{X}_2=[0,\infty)$, and with $L=2$ we obtain the largest bound from Theorem~\ref{thm:UpperBoundLoss} as
\begin{equation}
 \ent{X|Y} \leq \log 2=1.
\end{equation}
Both intervals are mapped to the positive (non-negative) real axis, i.e., $\dom{Y}_1\cup \{0\}=\dom{Y}_2=\dom{Y}=[0,\infty)$, which implies that the second bound in Theorem~\ref{thm:UpperBoundLoss} also yields $\ent{X|Y}\leq1$. The magnitude of the first derivative of $g(\cdot)$ is equal to unity on both $\dom{X}_1$ and $\dom{X}_2$. There are two partial inverses mapping $\dom{Y}$ to the subdomains of $\dom{X}$:
\begin{IEEEeqnarray}{RCL}
 x_1=g_1^{-1}(y)&=&-y=-g(x)\text{, and}\\
 x_2=g_2^{-1}(y)&=&y=g(x).
\end{IEEEeqnarray}
Thus for all $x\in\dom{X}$ we have $|\indset{g(x)}|=2$, which renders the smallest bound of Theorem~\ref{thm:UpperBoundLoss} as $\ent{X|Y}\leq 1$. 
Combining~\eqref{eq:magnitude} with the two partial inverses, we obtain for $x\in\dom{X}_1$:
\begin{IEEEeqnarray}{RCL}
 x_1&=&x\text{, and}\\
 x_2&=&-x.
\end{IEEEeqnarray}
Conversely, for $x\in\dom{X}_2$ we have $x_1=-x$ and $x_2=x$. Using this in~\eqref{eq:informationloss} we obtain the information loss
\begin{IEEEeqnarray}{RCL}
\ent{X|Y} &=& \int_{\dom{X}_1} f_X(x) \log \left( \frac{f_X(x)+f_X(-x) }{f_X(x) } \right)dx\notag\\
 &&{}\int_{\dom{X}_2} f_X(x) \log \left( \frac{f_X(-x)+f_X(x) }{f_X(x) } \right)dx\notag\\
&=& \log 2 \int_\dom{X} f_X(x) dx=1\notag
\end{IEEEeqnarray}
which shows that all bounds of Theorem~\ref{thm:UpperBoundLoss} are tight in this example.

The conditional entropy is identical to one bit. In other words, if an RV with an even PDF (thus, with equal probability masses for positive and negative values) is fed through a magnitude function, one bit of information is lost. Despite the fact that this result seems obvious, this is the first time that it is derived for a continuous input to the best knowledge of the authors.

\subsection{Example 2: Zero-Mean Uniform PDF, Piecewise Strictly Monotone Function}
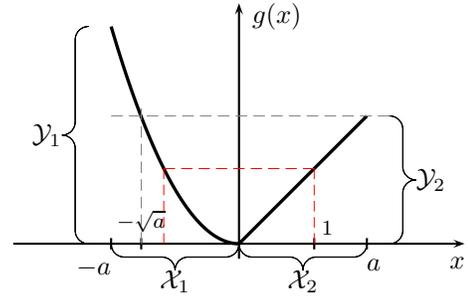
\begin{figure}[t]
 \centering
	\begin{pspicture}[showgrid=false](-4.0,-0.5)(4.0,3.2)
		\psaxeslabels{->}(0,0)(-3,-0.2)(3,3.2){$x$}{$g(x)$}
		\psplot[style=Graph]{0}{1.7}{x} \psplot[style=Graph]{-1.7}{0}{x x mul}
		\psTick{90}(-1.7,0) \rput[tr](-1.7,-0.2){$-a$}
		\psTick{90}(1.7,0) \rput[tl](1.7,-0.2){$a$} \psTick{90}(1,0) \rput[tl](1.1,0.3){\footnotesize$1$}
		\psTick{90}(-1.3,0) \rput[tc](-1.3,0.3){\footnotesize${-}\sqrt{a}$}\psline[style=Dash,linewidth=0.01,linecolor=gray](-1.7,1.7)(2,1.7)
		\psline[style=Dash,linewidth=0.01,linecolor=gray](-1.3,1.8)(-1.3,0)
		\psline[style=Dash,linewidth=0.01,linecolor=red](1,1)(1,0)\psline[style=Dash,linewidth=0.01,linecolor=red](-1,1)(-1,0)
		\psline[style=Dash,linewidth=0.01,linecolor=red](-1,1)(1,1)
		\psset{braceWidthInner=5pt,braceWidthOuter=5pt,braceWidth=0.1pt}
		\psbrace[ref=lC,linewidth=0.01](2,0)(2,1.7){$\dom{Y}_2$}\psbrace[ref=rC,rot=180,linewidth=0.01](-2,2.89)(-2,0){$\dom{Y}_1$}
		\psbrace[ref=t,rot=90,linewidth=0.01](-1.7,0)(0,0){$\dom{X}_1$}\psbrace[ref=t,rot=90,linewidth=0.01](0,0)(1.7,0){$\dom{X}_2$}
	\end{pspicture}
\caption{Piecewise strictly monotone function of Example 2}
\label{fig:sqlin}
\end{figure}
Consider an RV $X$ uniformly distributed on $[-a,a]$, $a\geq 1$, and a function $g(\cdot)$ defined as:
\begin{equation}
 g(x)=\begin{cases}
       x^2,&\text{if }x<0\\x,&\text{if }x\geq0
      \end{cases}.
\label{eq:squarelin}
\end{equation}
This function, depicted in Fig.~\ref{fig:sqlin}, is piecewise strictly monotone on $(-\infty,0)$ and $[0,\infty)$. We introduce the following partitioning:
\begin{IEEEeqnarray}{RCL}
 \dom{X}_1=[-a,0) &\rightarrow& \dom{Y}_1=(0,a^2]\\
 \dom{X}_2=[0,a]&\rightarrow& \dom{Y}_2=[0,a]
\end{IEEEeqnarray}
Since the function is not differentiable, we define $|g'(\cdot)|$ in a piecewise manner by the magnitude of the first derivatives of $g_l(\cdot)$:
\begin{equation}
 |g'(x)|=\begin{cases}
          2|x|,&\text{if }x<0\\1,&\text{if }x\geq0
         \end{cases}
\end{equation}
The two partial inverses on $\dom{Y}_2$ are given by 
\begin{IEEEeqnarray}{RCL}
 x_1&=&-\sqrt{y}=-\sqrt{g(x)}=
\begin{cases}
 x,&\text{if }x<0\\-\sqrt{x},&\text{if }x\geq0
\end{cases}\text{ and}\\
 x_2&=&y=g(x)=
\begin{cases}
 x^2,&\text{if }x<0\\x,&\text{if }x\geq0
\end{cases}
\end{IEEEeqnarray}
while on $\dom{Y}\backslash\dom{Y}_2=(a,a^2]$ only the root $x_1$ exists (i.e., this image is mapped to bijectively). Noticing that the information loss on the corresponding preimage $\dom{X}_b=[-a,-\sqrt{a})$ is zero, we can write for the conditional entropy:
\begin{IEEEeqnarray}{RCL}
\ent{X|Y} &=& 
\int_{\dom{X}_1\setminus\dom{X}_b} f_X(x) \log \left( \frac{\frac{f_X(x)}{2|x|}+f_X(x^2) }{\frac{f_X(x)}{2|x|} } \right)dx\notag\\
 &&{+}\:\int_{\dom{X}_2} f_X(x) \log \left( \frac{\frac{f_X(-\sqrt{x})}{2\sqrt{x}}+f_X(x) }{f_X(x) } \right)dx\notag\\
\end{IEEEeqnarray}
Since $f_X(x)=\frac{1}{2a}$ for all $x$, $x^2$, and $-\sqrt{x}$ in the designated integration ranges, we obtain
\begin{IEEEeqnarray}{RCL}
\ent{X|Y} &=& \int_{-\sqrt{a}}^{0} \frac{1}{2a} \log \left( 1+2|x| \right)dx\notag\\
 &&{+}\:\int_{0}^{a} \frac{1}{2a} \log \left( 1+\frac{1}{2\sqrt{x}} \right)dx\notag\\
&=& \frac{4a+4\sqrt{a}+1}{8a} \log (2\sqrt{a}+1)\notag\\
 &&{-}\:\frac{\log(2\sqrt{a})}{2}-\frac{1}{4\sqrt{a}\ln2}\notag
\end{IEEEeqnarray}
where $\ln$ is the natural logarithm. For $a=1$ this approximates to $\ent{X|Y}\approx 0.922$~bits. The information loss is slightly less than one bit, despite the fact that two equal probability masses collapse and the complete sign information is lost. This suggests that by observing the output part of the sign information can be retrieved. Looking at Fig.~\ref{fig:sqlin}, one can see that for the subdomain located on the negative real axis, i.e., for $\dom{X}_1$, more probability mass is mapped to smaller outputs than to higher outputs. Thus, for a small output value $y$ it is more likely that the input originated from $\dom{X}_1$ than from $\dom{X}_2$ (and vice-versa for large output values). Mathematically, this means that despite $p(w_1)=p(w_2)=0.5$, we have $p(w_1|y)\neq p(w_2|y)$, which according to Theorem~\ref{thm:equivToRoots} plays a central role in computing the conditional entropy $\ent{X|Y}$.

By evaluating the bounds from Theorem~\ref{thm:UpperBoundLoss} we obtain
\begin{equation}
 \ent{X|Y} \leq \frac{1+\sqrt{a}}{2\sqrt{a}} \leq \log\left(\frac{3\sqrt{a}+1}{2\sqrt{a}}\right)\leq 1
\end{equation}
which for $a=1$ all evaluate to 1 bit. The bounds are not tight as the conditions of Theorem~\ref{thm:UpperBoundLoss} are not met in this case.

\subsection{Example 3: Normal PDF, Third-order Polynomial}\label{ex:3rd}
\begin{figure}[t]
 \centering
	\begin{pspicture}[showgrid=false](-4,-1.25)(4,2)
		\psaxeslabels{->}(0,0)(-3,-1.25)(3,1.5){$x$}{$g(x)$}
		\psplot[style=Graph]{-2.6}{2.6}{x x mul x mul x -4 mul add 0.3 mul 2 div}
		\psTick{90}(-1.15,0) \rput[th](-1.15,-0.1){\footnotesize$-\frac{10}{\sqrt{3}}$}
		\psTick{90}(2.31,0) \rput[th](2.31,-0.1){\footnotesize$\frac{20}{\sqrt{3}}$}
		\psline[style=Dash,linewidth=0.01, linecolor=gray](-2.6,0.46)(2.6,0.46) 
	\end{pspicture}
\caption{Third-order polynomial of Example 3}
\label{fig:3rd}
\end{figure}
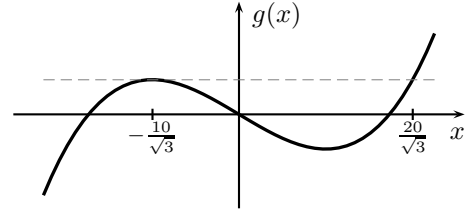
Finally, consider a Gaussian RV $X\sim\normdist{0}{\sigma^2}$ and the function
\begin{equation}
 g(x) = x^3-100x
\end{equation}
depicted in Fig.~\ref{fig:3rd}. An analytic computation of the information loss is prevented by the logarithm of a sum in~\eqref{eq:informationloss}. Still, we will show that with the help of Theorem~\ref{thm:UpperBoundLoss} at least a bound on the information loss can be computed.

Judging from the extrema of this function, three piecewise monotone intervals can be defined. 
Further, the domain which is mapped bijectively can be shown to be identical to
 $\dom{X}_b = \left(-\infty,-\frac{20}{\sqrt{3}}\right] \cup \left[\frac{20}{\sqrt{3}},\infty\right)$
and contains a probability mass of
\begin{equation}
 P_b = 2F_X\left(-\frac{20}{\sqrt{3}}\right) = 2Q\left(\frac{20}{\sqrt{3}\sigma}\right)
\end{equation}
where $Q(\cdot)$ is the $Q$-function. 
With this result and the fact that 
\begin{equation}
 P_b=\int_{\dom{X}_b} f_X(x)dx=\int_{\dom{Y}_b} f_Y(y)dy
\end{equation}
for a bijective mapping $g(\cdot)$ between $\dom{X}_b$ and $\dom{Y}_b$ we can upper bound the information loss by Theorem~\ref{thm:UpperBoundLoss}:
\begin{IEEEeqnarray}{RCL}
 \ent{X|Y} &\leq& \int_\dom{Y} f_Y(y)\log(|\indset{y}|)dy\\
&=&\int_{\dom{Y}\setminus\dom{Y}_b}  f_Y(y)\log3dy
= (1-P_b)\log 3
\end{IEEEeqnarray}
since $|\indset{y}|=1$ if $y\in\dom{Y}_b$ and $|\indset{y}|=3$ if $y\in\dom{Y}\setminus\dom{Y}_b$.
\begin{figure}[t]
 \centering
	\includegraphics[width=0.49\textwidth]{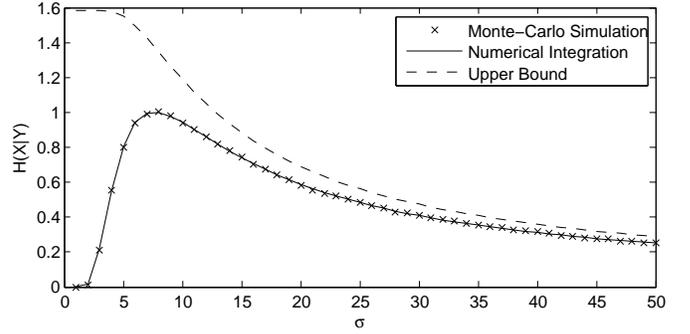}
\caption{Information loss of Example 3}
\label{fig:3rdorderSIM}
\end{figure}
In Fig.~\ref{fig:3rdorderSIM}, this bound is illustrated together with the results from numerical integration of~\eqref{eq:informationloss} and from Monte-Carlo simulations of the information loss.

\begin{figure*}[b]
 \normalsize
\hrulefill
 \begin{IEEEeqnarray}{RCL}
 \ent{X|Y} &=& \int f_{X}(x) \log\left(\frac{|g'(x)|f_{Y}(g(x))}{f_X(x)}\right)dx-\lim_{\hat{X}\rightarrow X}\iint f_{\hat{X}X}(\hat{x},x) \log\left(\sum_{i\in\indset{g(x)}}\frac{|g'(x)|f_{X|\hat{X}}(\hat{x},x_i)}{|g'(x_i)|f_{X|\hat{X}}(\hat{x},x)}\right)d\hat{x}dx\label{eq:long1}
\end{IEEEeqnarray}
\end{figure*}

\section{Conclusion}
We presented an analytic expression for the information loss induced by a static nonlinearity. It was shown that this information loss is strongly related to the non-injectivity of the system, i.e., to the fact that a particular output can result from multiple possible inputs. Conversely, given a certain output, the input to the system under consideration is uncertain only with respect to the roots of the equation describing the system. The information loss can be utilized, e.g., for estimating the reconstruction error for nonlinearly distorted signals.

Since the obtained expression involves the integral over the logarithm of a sum, bounds on the information loss were derived which can be computed with much less difficulty. In particular, it was shown that the information loss for a piecewise strictly monotone function is upper bounded by the logarithm of the number of subdomains, and that this bound is tight.

Generalizations of these results to rates of information loss and nonlinear systems with memory, as well as the extension to discrete random variables are the object of future work.

\section*{Acknowledgments}
The authors gratefully acknowledge discussions with Sebastian Tschiatschek concerning mathematical notation, and his comments improving the quality of this manuscript.

\appendix

We have yet to show that in the proof of Theorem~\ref{thm:InfoLoss} limit and integration can be exchanged. Since both $\mutinf{\hat{X};X}$ and $\mutinf{\hat{X};Y}$ are an expectation over the same joint PDF (cf.~\eqref{eq:mutinf3} and~\eqref{eq:mutinf1}), we can express their difference as a single integral. Splitting the logarithm of a product in a sum of logarithms, we obtain
\begin{IEEEeqnarray*}{RCL}
 \ent{X|Y} &=& \lim_{\hat{X}\rightarrow X} \left(\mutinf{\hat{X};X} - \mutinf{\hat{X};Y}\right)\\
&=&\lim_{\hat{X}\rightarrow X}\left[ \iint f_{\hat{X}X}(\hat{x},x) \log\left(\frac{f_{Y}(g(x))}{f_X(x)}\right)d\hat{x}dx\right.\\
&&{+}\: \left.\iint f_{\hat{X}X}(\hat{x},x) \log\left(\frac{f_{X|\hat{X}}(\hat{x},x)}{f_{Y|\hat{X}}(\hat{x},g(x))}\right)d\hat{x}dx\right].
\end{IEEEeqnarray*}
Note that the integration ranges have been omitted due to space limitations. The first double integral can be reduced to a single integral and taken out of the limit, since the logarithm does not depend on $\hat{x}$. For the second integral we first invoke the method of transformation (e.g.,~\cite{Papoulis_Probability}) to obtain
\begin{IEEEeqnarray*}{RCL}
 f_{Y|\hat{X}}(\hat{x},g(x)) &=& \sum_{i\in\indset{g(x)}}\frac{f_{X|\hat{X}}(\hat{x},x_i)}{|g'(x_i)|}\\
   &=& \frac{f_{X|\hat{X}}(\hat{x},x)}{|g'(x)|} \sum_{i\in\indset{g(x)}}\frac{|g'(x)|f_{X|\hat{X}}(\hat{x},x_i)}{|g'(x_i)|f_{X|\hat{X}}(\hat{x},x)}
\end{IEEEeqnarray*}
Using this result in the integral above and splitting again the logarithm, one obtains~\eqref{eq:long1}. Note that one element in the sum in the logarithm is identical to one, since the preimage of $g(x)$ contains $x$. All other elements of the sum are non-negative, so the integral is taken over a positive function.

Now let $\hat{X}$ approach $X$ in a special way: As depicted in Fig.~\ref{fig:sysmod}, $\hat{X}$ is a (non-uniformly) quantized version of $X$. We perform the limit as a sequence of refinements of the partitioning, such that eventually the step sizes approach zero. $X$ can thus be viewed as the sum of $\hat{X}$ and a signal-dependent noise term $N$ with support $\Delta_{\hat{x}}$. The subscript $\hat{x}$ indicates the dependence of the quantization step size on the quantization bin, i.e., the non-uniformity of the quantizer. Let $Q(\hat{X})$ denote the discrete values $\hat{X}$ can assume. We then obtain for the conditional PDF of $X$ given $\hat{X}$
\begin{equation}
 f_{X|\hat{X}}(\hat{x},x) = \frac{f_X(x)\chi_{\{|x-Q(\hat{x})|<\frac{\Delta_{\hat{x}}}{2}\}}}{F_X(Q(\hat{x})+\frac{\Delta_{\hat{x}}}{2})-F_X(Q(\hat{x})-\frac{\Delta_{\hat{x}}}{2})}
\end{equation}
where $\chi_{\{\cdot\}}$ is the indicator function, assuming one whenever the condition in the argument is fulfilled and zero otherwise.

The refinement of the partitioning suggests a converging sequence of functions as the argument of the integral in~\eqref{eq:long1}. At some point in the sequence we can assume that, for all $x$, the partitioning is so fine that no two elements of the preimage of $g(x)$ lie in the same quantization bin. This is trivially fulfilled if the end points of all intervals $\dom{X}_l$ (cf. Definition~\ref{def:function}) are quantization thresholds. In this case the sum in the logarithm in~\eqref{eq:long1} is identical to one, which renders the argument of the integral zero. Since any refinement of a partitioning does not change existing, but only adds new quantization thresholds, the integral remains zero.

While this analysis yields yet another proof for Theorem~\ref{thm:InfoLoss}, it also allows us to upper bound the argument of the second integral in~\eqref{eq:long1} by the zero function. Thus, invoking Lebesgue's dominated convergence theorem allows us to exchange limit and integration.

\bibliographystyle{IEEEtran}
\bibliography{IEEEabrv,/afs/spsc.tugraz.at/project/IT4SP/1_d/Papers/InformationProcessing.bib,%
/afs/spsc.tugraz.at/project/IT4SP/1_d/Papers/ProbabilityPapers.bib,%
/afs/spsc.tugraz.at/user/bgeiger/includes/textbooks.bib,%
/afs/spsc.tugraz.at/user/bgeiger/includes/myOwn.bib,%
/afs/spsc.tugraz.at/project/IT4SP/1_d/Papers/HMMRate.bib,%
/afs/spsc.tugraz.at/project/IT4SP/1_d/Papers/InformationWaves.bib,%
/afs/spsc.tugraz.at/project/IT4SP/1_d/Papers/ITBasics.bib,%
/afs/spsc.tugraz.at/project/IT4SP/1_d/Papers/ITAlgos.bib}

\end{document}